\documentclass[11pt]{article}
\usepackage{times}
\usepackage[ dvips]{color}
\usepackage{amsmath}
\usepackage{amssymb}
\usepackage{amsthm}
\usepackage{epsfig}
\usepackage{url}

\usepackage{algorithm}
\usepackage{algorithmic}
\usepackage{verbatim}

\numberwithin{equation}{section}

\newtheorem{corollary}{Corollary}
\newtheorem{lemma}{Lemma}
\newtheorem{theorem}{Theorem}

\newtheorem{definition}{Definition}

\topmargin 0pt \advance \topmargin by -\headheight \advance
\topmargin by -\headsep \textheight 8.9in \oddsidemargin 0pt
\evensidemargin \oddsidemargin \textwidth 6.5in
\parskip 4pt
\parindent 0pt


\floatstyle{ruled}
\newfloat{algo}{tbp}{lop}
\floatname{algo}{Algorithm}

\begin{document}

\date{}
\title{Simple Parallel and Distributed Algorithms \\ for Spectral Graph Sparsification \footnote{ This work is supported by NSF CAREER award CCF-1149048.}}
\author{Ioannis Koutis \\ Computer Science Department\\  University of Puerto Rico-Rio Piedras \\ioannis.koutis@upr.edu \\ }
\maketitle

\begin{abstract}
 We describe a simple algorithm for spectral graph sparsification, based on iterative computations of weighted spanners and uniform sampling. Leveraging the algorithms of Baswana and Sen for computing spanners, we obtain the first distributed spectral sparsification algorithm. We also obtain a parallel algorithm with improved work and time guarantees. 
 Combining this algorithm with the parallel framework of Peng and Spielman for solving symmetric diagonally dominant linear systems, we get a parallel solver which is much closer to being practical and significantly more efficient in terms of the total work.
\end{abstract}

\section{Introduction}

The efficient transformation of dense instances of graph problems to nearly equivalent sparse instances is a
powerful tool in algorithm design. Spectral sparsifiers are sparse graphs that preserve within an $1+\epsilon$ factor the quadratic form $x^TL_G x$, where $L_G$ is the Laplacian of $G$
 and $\epsilon$ is a parameter of choice. They were introduced by Spielman and Teng
\cite{SpielmanTeng04} as a basic component of the first nearly-linear time solvers
for linear systems  on symmetric diagonally dominant (SDD)
matrices  \footnote{A symmetric matrix $A$ is SDD if for all $i$, $A_{ii} \geq \sum_{j\neq i} |A_{ij}|$.}. Such linear system solvers are a key algorithmic primitive with numerous applications \cite{KMP2012,Teng:2010}.

The Spielman and Teng sparsification algorithm produces sparsifiers with
$O(n\log^c n/\epsilon^2)$ edges for some fairly large constant $c$, where $n$
is the number of vertices in the graph. At a high level
their algorithm is based on graph decompositions into edge-disjoint sets
that get sparsified independently via uniform sampling.
As noted in \cite{PengSpielman13} the algorithm can be parallelized
if the original partitioning subroutine is substituted by a more
recent one due to Orecchia and Vishnoi \cite{DBLP:conf/soda/OrecchiaV11}.

Peng and Spielman \cite{PengSpielman13} recently presented a novel algebraic framework
for solving SDD systems. It enables the use of parallel sparsification algorithms
for constructing parallel solvers. Combined with the parallelized Spielman and Teng sparsification algorithm,
or a more recent approach due to Peng (Section 3.4, \cite{Peng.thesis}), this algebraic
 framework yields the first `truly' parallel SDD solver that does near-linear work and runs in polylogarithmic time.

The new parallel solver leaves something to be desired:
its work is by several logarithmic factors larger than that of the fastest known sequential
algorithm that runs in $\tilde{O}(m\log n)$ time\footnote{We use $\tilde{O}()$
to hide a $poly(\log \log n)$ factor.}; here $m$ is
 the number of non-zero entries in the matrix \cite{KoutisMP_FOCS11}. 
This motivates our study on parallel and distributed sparsification
algorithms.

{\bf Background on spectral sparsification.} Besides yielding the SDD solver, the work of Spielman and Teng
spurred further research on spectral sparsification
as a stand-alone problem.  Spielman and Srivastava~\cite{SpielmanS08}
showed that it is possible to produce a sparsifier
with $O(n \log n/\epsilon^2)$ edges in near-linear time.  
Their approach is based on viewing the graph as an electrical
resistive network, where one can define the effective resistance
of an edge as the potential difference that must be applied
between its two endpoints in order to send one unit of electrical
flow from the one vertex to the other. The sparsifier is computed by sampling edges
with probabilities proportional to the their effective resistances.  
Spielman and Srivastava also showed that $O(\log n)$ calls
to a solver for SDD linear systems can produce sufficiently good
approximations to all effective resistances, allowing
for a near-linear time implementation of their sampling scheme.
This development was followed by works
on slower but more sparsity-efficient spectral sparsification
algorithms \cite{BatsonSS09,Kolla10} and on sparsification
in the semi-streaming model \cite{KelnerL11}.

The work of Spielman and Srivastava opened the way to 
the near-$m\log n$ time solver in \cite{KoutisMP_FOCS10, KoutisMP_FOCS11}.
This fast solver utilizes an `incremental sparsification'
algorithm that produces a very mildly sparser spectral approximation
to the input graph. A direct by-product of this fast solver was the acceleration
of the Spielman-Srivastava sparsification scheme. Their
scheme was further improved in \cite{KoutisLP12,KoutisLP12a},
yielding an $\tilde{O}(m)$ solver for slightly non-sparse graphs; the
 solver combines in an intricate recursive way slower solvers
with spectral sparsifiers.

Recent  efforts aim to obtain simpler algorithms via alternative
approaches. In particular, there has been an interest in
combinatorial algorithms that rely less on the power
of algebra to achieve similar results \cite{KapralovP2012,Kelner2013}.
We do not insist that these simpler algorithms are asymptotically as efficient
as their algebraic counterparts. In practice there are
many phenomena, subtler than asymptotic behavior or even hidden
constants, that affect the performance of linear system solvers, and
different ideas may lead to better implementations. In particular,
there are implementations that exhibit great empirical
performance on sparse matrices  \cite{KoutisMiller09,LivneB12};
solve-free techniques for spectral sparsification
have the potential of extending the applicability of these
implementations to dense matrices. 

The first combinatorial alternative to the spectral sparsification algorithm
of Spielman and Teng was given by Kapralov and Panigrahi \cite{KapralovP2012}.
A novel feature of their work is the introduction of spanners
in the context of spectral graph sparsification. 
The algorithm is based on tightly approximating
effective resistances; more concretely, they define
`robust connectivities' of edges and show they are good
upper bounds to the effective resistances, on average. Approximate
robust connectivities are then used for sampling; the
result follows from an application of the `oversampling'
Lemma of~\cite{KoutisMP_FOCS10} which shows
that extra sampling can compensate for the the
lack of accuracy in the estimates for the effective
resistances; this extra sampling yields the slightly
more dense sparsifier. 
The algorithm generates a sparsifier with $O(n\log^4 n/\epsilon^4)$ edges in $O(m\log^4 n)$ time
but it doesn't parallelize mostly due to the use of distance
oracles by Thorup and Zwick~\cite{ThorupZ05}.

For a more thorough review of the sparsification literature, 
we refer the reader to the excellent article by Batson et al.~\cite{BatsonSST13}.


\textbf{In this work.} We describe a simple parallel and distributed algorithm
that exposes  a closer connection between spanners and sparsification. 
Using only iterated computations of weighted spanners and uniform sampling the
algorithm produces an $(1\pm \epsilon)$-approximation with
$O(n\log^3 n \log^3 \rho/\epsilon^2 + m/\rho)$ edges, where $\rho$ is \textbf{sparsification factor}
of choice.

The idea behind the algorithm is simple.  In order to reduce
the number of edges by a factor of~$\rho$, we compute $O(\log^2 n \log^2 \rho/\epsilon^2)$ edge-disjoint
spanners of the graph that allow us to certify upper bounds for
the effective resistances of the rest of the edges.
The upper bounds enable uniformly sampling-away
about half of the remaining edges while spectrally preserving
the graph within a $(1+\epsilon/(4 \log \rho))$ factor. The process
is applied iteratively, and after $O(\log \rho)$ rounds we get a graph
that $(1+\epsilon)$-approximates the input graph and has $O(n\log^3 n \log^3 \rho + m/\rho)$
edges. The total work is $O(m \log^2 n \log^3 \rho/\epsilon^2)$.

We use our parallel sparsification algorithm to obtain a solver for
SDD linear systems that works in polylogarithmic time and does
$\tilde{O}((m\log^2 n + n\log^{5} n \log^5 \kappa)(\log(1/\tau) )$ work,
where $\tau$ is a standard measure of tolerance in the error
of the approximate solution, and $\kappa$ is the condition number
of the input system.

\section{Background}

\textbf{Laplacians.} Given a weighted graph $G = (V,E,w>0)$  where $V=\{1,\ldots,n\}$, its Laplacian $L_G$ is the
matrix defined by: 
$$ \textnormal{(i) $L_G(i,j) = -w_{ij}$ for $i\neq j$ ~and~ (ii) $L_G(i,i)=\sum_{j\neq i} w_{ij}$}.$$

Throughout the paper we will $n,m$ to denote the number of vertices and edges of a graph respectively. We will apply algebraic operators on graphs in a standard way. Specifically, given two graphs $G_1= (V,E,w_1)$ and $G_1= (V,E,w_2)$ we denote by $G_1+G_2$ the graph $(V,E,w_1+w_2)$. Also given a scalar $a$ we let $aG_1 = (V,E,a w_1)$.

\textbf{Spectral approximation.} We say that a graph $H$, $(\beta/\alpha)$-approximates a graph $G$ if:
$$
     \alpha(x^T L_H x) \leq x^T L_G x \leq \beta (x^T L_H x).
$$

Finally, if for all vectors $x$ we have $x^T L_{G_2} x \leq x^T L_{G_1} x$ we will write $G_2\preceq G_1$.

\textbf{Stretch.}  Let $p$ be a path joining the two endpoints of an edge $e\in E$.  The stretch $st_p(e)$ of an edge $e$, is equal to
$$
    w_e\sum_{e'\in p} (1/w_{e'}).
$$
We also define the stretch of $e$ over a graph $H$ as
$$
    st_H(e) = \min_{p\in H} st_p(e).
$$

\textbf{Spanners.} A $\log n$-spanner of a graph $G$ is a subgraph $H$ of $G$ such that for all edges $e\in E$
$$
    st_H(e) \leq 2\log n.
$$

In the rest of the paper we will use the term spanner to mean a $\log n$-spanner.
Every graph contains a spanner with $O(n\log n)$ edges that can be computed efficiently in the CRCW PRAM model and the synchronous distributed model. Concretely, we adapt here Theorems 5.4 and 5.1 respectively, from Baswana and Sen \cite{BaswanaS07}.

\begin{theorem}  \label{th:parSpanner}
Given a graph $G$, a spanner for $G$ of expected size $O(n\log n)$ can be constructed with $O(m\log n)$ work in $\tilde{O}(\log n)$ time with high probability. The algorithm runs in the CRCW PRAM model.
\end{theorem}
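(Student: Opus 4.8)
The plan is to reduce the statement to the weighted $(2k-1)$-spanner construction of Baswana and Sen (their Theorem~5.4) for a suitable $k=\Theta(\log n)$, and then to check that this construction admits a CRCW PRAM implementation meeting the stated bounds. The reduction is a change of variables: apply the Baswana--Sen algorithm to $G'=(V,E,\ell)$ with edge lengths $\ell_e=1/w_e$. For any subgraph $H$ and any edge $e=(u,v)$ one has $st_H(e)=w_e\cdot\min_{p\in H}\sum_{e'\in p}1/w_{e'}=w_e\,d_H^{\ell}(u,v)$, where $d_H^{\ell}$ is shortest-path distance under $\ell$; hence a subgraph that preserves all $\ell$-distances up to a multiplicative factor $2k-1$ is precisely a subgraph with $st_H(e)\le 2k-1$ for every $e\in E$. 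Taking $k=\lfloor\log n\rfloor$ makes $2k-1\le 2\log n$, so such a subgraph is a $\log n$-spanner in the sense of the paper, and since $n^{1/k}=O(1)$ for this $k$, the Baswana--Sen size bound $O(k\,n^{1+1/k})$ becomes $O(n\log n)$ in expectation. The stretch guarantee and the expected-size bound are then inherited verbatim and need no new argument.

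It remains to implement the construction in parallel. Recall that the Baswana--Sen algorithm runs in $k$ phases, maintaining a partition of the surviving vertices into clusters: in each phase every current cluster is kept independently with probability $n^{-1/k}$; each surviving vertex adjacent to a kept cluster joins the one reached by its least-length incident edge, and adds that edge to $H$ together with the least-length edge to each strictly-nearer non-kept adjacent cluster; each surviving vertex adjacent to no kept cluster adds the least-length edge to every adjacent cluster and is deactivated; and finally all edges with both endpoints in a common cluster, and all edges incident to deactivated vertices, are discarded. The number of phases $k=\Theta(\log n)$ is deterministic, and in every phase the set of surviving edges has size at most $m$. So it suffices to show that a single phase runs in $\tilde{O}(1)$ depth and $O(m)$ work on a randomized CRCW PRAM; summing over the $k$ phases then yields $O(m\log n)$ work and $\tilde{O}(\log n)$ depth.

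A phase uses three kinds of primitives. First, independent Bernoulli sampling of the $O(n)$ clusters, in $O(1)$ depth and $O(n)$ work. Second, and this is the only delicate step, for each surviving vertex we must group its incident surviving edges by the cluster id of the far endpoint, take the minimum-length edge within each group, and select the groups whose minimum is strictly below that of the joined (kept) cluster; tagging each surviving edge with the cluster ids of its two endpoints and applying a randomized CRCW semisort (an integer sort on keys bounded by $\mathrm{poly}(n)$) brings each vertex's incident edges into contiguous, cluster-grouped blocks in $\tilde{O}(1)$ depth and $O(m)$ work with high probability, after which the per-block minima and the threshold selection are carried out with the classical $O(\log\log n)$-depth, linear-work CRCW minimum and prefix routines. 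Third, relabeling cluster membership and compacting away discarded edges, which is array compaction in $\tilde{O}(1)$ depth and linear work. This gives the per-phase bound; the ``with high probability'' qualifier enters only through these randomized sorting and compaction primitives, since the phase count and hence the $O(m\log n)$ work bound are deterministic. I expect the sole real obstacle to be exactly this bookkeeping---implementing the per-phase ``group by neighbouring cluster, extract minima, select a prefix'' step within $\tilde{O}(1)$ depth, and maintaining compact cluster ids across phases---rather than anything combinatorially new beyond what Baswana--Sen already establish.
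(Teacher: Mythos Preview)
Your proposal is correct and, in fact, goes further than the paper does: the paper gives no proof of this theorem at all, merely stating that it is an adaptation of Theorem~5.4 of Baswana and Sen~\cite{BaswanaS07}. Your reduction (lengths $\ell_e=1/w_e$, choice $k=\lfloor\log n\rfloor$, and the resulting size bound $O(kn^{1+1/k})=O(n\log n)$) is exactly the intended specialization, and your sketch of the per-phase CRCW PRAM implementation via semisort and $O(\log\log n)$-depth minima is a reasonable elaboration of what Baswana and Sen already establish; the paper simply takes all of this as a black box.
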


\begin{theorem} \label{th:distrSpanner}
Given a graph $G$, a spanner for $G$ of expected size $O(n\log n)$ can be constructed in the synchronous distributed model in $O(\log^2 n)$ rounds and $O(m\log n)$ communication complexity. Moreover, the length of each message communicated is $O(\log n)$.
\end{theorem}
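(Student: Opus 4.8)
The plan is to instantiate the weighted $(2k-1)$-spanner construction of Baswana and Sen with $k:=\lfloor\log n\rfloor$, verify that for this choice it outputs a graph that is a spanner in the sense defined above and has the claimed expected size, and then check that one of its standard distributed implementations meets the stated round, communication, and message-length budgets.

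First, to reconcile the two notions of stretch: set edge ``lengths'' $\ell_e:=1/w_e$ and run the Baswana--Sen algorithm on $(V,E,\ell)$, treating $\ell$ as the weight function. A path $p$ witnessing multiplicative stretch at most $2k-1$ for $e$ in this instance, i.e.\ $\sum_{e'\in p}\ell_{e'}\le(2k-1)\ell_e$, is exactly a path with $w_e\sum_{e'\in p}(1/w_{e'})\le 2k-1$, so $st_H(e)\le 2k-1$. Recall the algorithm runs in $k$ phases while maintaining a partial clustering: it starts from singleton clusters; in phase $i$ each surviving cluster is retained independently with probability $n^{-1/k}$, every vertex in a dropped cluster with no neighbor in a retained cluster permanently freezes after adding to the spanner one minimum-length incident edge to each adjacent cluster, and every other vertex in a dropped cluster joins the retained cluster reached by its minimum-length incident inter-cluster edge (adding that edge) and also adds one minimum-length incident edge to each cluster reached by a strictly lighter edge; a final phase has each still-clustered vertex add one minimum-length incident edge to each adjacent cluster. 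Baswana and Sen show the output $H$ has $st_H(e)\le 2k-1$ for every $e$ and $\mathbb{E}[|E(H)|]=O(k\,n^{1+1/k})$; with $k=\lfloor\log n\rfloor$ we get $2k-1<2\log n$ and $n^{1/k}=O(1)$, so $H$ is a spanner of expected size $O(n\log n)$.

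For the distributed cost, maintain the invariant that after phase $i$ each surviving cluster carries a rooted BFS tree of radius at most $i$. A phase then consists of: the center's survival coin flip broadcast down the cluster tree ($O(i)$ rounds); each active vertex announcing its surviving cluster's identifier to its neighbors ($O(1)$ rounds); and each dropped vertex locally picking the minimum-length incident edge reaching a surviving cluster, sending a ``join'' along it, and locally marking as spanner edges its lighter incident edges to other clusters ($O(1)$ rounds), which extends the joined cluster's tree by one level and preserves the invariant. Thus phase $i$ costs $O(i)$ rounds and $\sum_{i=1}^{k}O(i)=O(k^2)=O(\log^2 n)$ rounds overall; each phase uses $O(n)$ messages for the broadcasts (total cluster-tree size $\le n$) plus $O(m)$ for the neighbor announcements and joins, hence $O(mk)=O(m\log n)$ messages total. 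Every message carries only a cluster identifier ($O(\log n)$ bits) and a small hop counter ($O(\log\log n)$ bits); edge weights never need to be transmitted, since all minimum-length comparisons are among edges incident to a single vertex and are performed locally.

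The step I expect to be the main obstacle is enforcing the $O(\log^2 n)$-round and $O(\log n)$-bit-message bounds simultaneously. The rule ``join the nearest surviving cluster and attach to every strictly closer cluster'' tempts one to have a vertex collect, in a single round, the identities and distances of all incident clusters, which needs long messages; keeping messages short forces the cluster-tree broadcasts and neighbor announcements to be pipelined over $O(i)$ rounds and requires breaking ties between equidistant clusters consistently by identifier, so that all vertices agree on the updated clustering. Once this scheduling is fixed, the stretch and expected-size guarantees transfer verbatim from Baswana and Sen through the substitution $\ell_e=1/w_e$, and the stated bounds follow.
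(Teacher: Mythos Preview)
The paper does not actually give a proof of this statement; it is stated without proof as an adaptation of Theorem~5.1 of Baswana and Sen~\cite{BaswanaS07}. Your proposal correctly identifies this source, instantiates their $(2k-1)$-spanner construction with $k=\lfloor\log n\rfloor$, and sketches why the resulting distributed implementation meets the stated round, communication, and message-length bounds; this is precisely the content the paper is importing by citation, so your approach is the intended one.
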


%

Here we define an object that plays a key role in our algorithm.

\begin{definition}
 Let $G$ be a graph and $H_1,\ldots,H_t$ be subgraphs of $G$ such that $H_i$ is a spanner for the graph
 $G-\sum_{j=1}^{i-1} {H_j}$. We call $H = \sum_{j=1}^t H_j$ a $t$-bundle spanner. We call the $H_i$'s the components of $H$.
\end{definition}

\textbf{Effective Resistance.} A graph can be viewed as an electrical resistive network, with each edge corresponding to a resistor having resistance $r_e = 1/w_e$. The effective resistance $R_{u,v}[G]$ between two  vertices $u$ and $v$ in $G$  is defined as the potential difference that has to be applied on $u$ and $v$ in order to drive one unit of current through the network. For instance, in the case of a path $p$ the effective resistance between the two endpoints of $p$
is equal to $R_e[p]= \sum_{e'\in p} (1/w_{e'})$; this is the well known formula for resistors connected in series.

Now let us recall a simple fact about paths connected `in parallel', i.e. paths that are vertex-disjoint with the exception of their shared endpoints $u$ and $v$. Let $p_1,\ldots,p_t$ be paths connected in parallel. Let $P = \sum_{i=1}^t p_i$.
For the effective resistance between $u$ and $v$, in the graph $P$ consisting of the union of the paths, we have
\begin{equation} \label{eq:parallel}
   R_{u,v}[P] = \left(\sum_{i=1}^t (R_{u,v}[p_i])^{-1}\right)^{-1}.
\end{equation}

The following Lemma has a key role in our sparsification algorithm.

\begin{lemma} \label{lem:bundle}
 Let $G$ be a graph and $H$ be a $t$-bundle spanner of $G$.  For every edge $e$ of
 $G$ which is not in $H$, we have
 $$
      w_e R_e[G] \leq  \log n /t.
 $$
\end{lemma}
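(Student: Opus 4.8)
The plan is to use each component $H_i$ of the $t$-bundle spanner $H$ to produce a short $u$--$v$ path, where $u,v$ are the endpoints of $e$, and then to argue that these $t$ edge-disjoint paths behave essentially like resistors connected in parallel, which pulls the effective resistance of $e$ down by a factor of about $t$. Concretely, I would first extract the paths: since $e$ lies in none of the components $H_1,\dots,H_t$, it is still an edge, with its original weight $w_e$, of the graph $G-\sum_{j<i}H_j$ for every $i\le t$; as $H_i$ is a spanner of that graph, there is a path $p_i\subseteq H_i$ joining $u$ and $v$ with $st_{p_i}(e)=w_e\sum_{e'\in p_i}(1/w_{e'})\le 2\log n$, equivalently $R_e[p_i]\le 2\log n/w_e$. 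The components are pairwise edge-disjoint by construction, so the paths $p_1,\dots,p_t$ use pairwise disjoint edge sets; set $P=\sum_{i=1}^t p_i$, a subgraph of $H$ and hence of $G$.

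Next I would relate $R_e[G]$ to the quantities $R_e[p_i]$ by routing through $P$. Since $P\preceq H\preceq G$, the monotonicity of effective resistance under edge additions (Rayleigh monotonicity) gives $R_e[G]\le R_e[H]\le R_e[P]$. The $p_i$ are edge-disjoint but may share internal vertices, so the parallel-resistance identity (\ref{eq:parallel}) does not apply to $P$ verbatim. I would circumvent this by letting $P'$ be the graph obtained by taking the $p_i$ vertex-disjoint except at the common endpoints $u$ and $v$: then $P$ arises from $P'$ by repeatedly identifying the two copies of a shared internal vertex, and each such shorting can only decrease the $u$--$v$ effective resistance, so $R_e[P]\le R_e[P']$. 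Identity (\ref{eq:parallel}) applies to $P'$ and gives $R_e[P']=\big(\sum_{i=1}^t R_e[p_i]^{-1}\big)^{-1}\le\big(\sum_{i=1}^t w_e/(2\log n)\big)^{-1}=2\log n/(t\,w_e)$. Chaining these inequalities yields $w_eR_e[G]\le 2\log n/t$, which is the stated bound up to the absolute constant carried by the spanner's stretch parameter. Alternatively, one can bypass the shorting law and get $R_e[P]^{-1}\ge\sum_i R_e[p_i]^{-1}$ directly from Dirichlet's variational principle for effective conductance: an optimal potential for $P$ restricts to a feasible potential on each $p_i$, and since the edge sets of the $p_i$ partition that of $P$ the energies simply add.

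The one genuine obstacle is this middle step: the spanner property guarantees only that the paths $p_i$ are edge-disjoint, not vertex-disjoint, so one must justify that shared internal vertices can only decrease $R_e$. This is exactly what the shorting form of Rayleigh monotonicity — or, equivalently, the variational characterization of effective conductance — provides. Everything else is a routine unwinding of the definitions of stretch, spanner, and $t$-bundle spanner.
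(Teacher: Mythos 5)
Your proof is correct and follows essentially the same route as the paper's: extract a low-stretch path $p_i$ from each component of the bundle, bound $R_e[p_i]\le 2\log n/w_e$, combine the paths as resistors in parallel, and finish with Rayleigh monotonicity since $P$ is a subgraph of $G$. You are in fact more careful than the paper at the one delicate point: the paper applies the parallel-resistance identity (\ref{eq:parallel}) to $P=\sum_i p_i$ as though the paths were internally vertex-disjoint, whereas the bundle construction only guarantees edge-disjointness; your shorting (or Dirichlet variational) argument supplies exactly the needed inequality $R_e[P]\le\bigl(\sum_i R_e[p_i]^{-1}\bigr)^{-1}$, and your final constant $2\log n/t$ is what the computation actually yields --- the paper's own last step silently drops this factor of $2$ when matching the stated bound $\log n/t$.
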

\begin{proof}
Let $H_1,\ldots,H_t$ be the components of $H$.
If $H'$ is any subgraph of $G$ then by Rayleigh's monotonicity law \cite{doyle-2000} the effective resistance of $e$ is at most equal to the effective resistance between the two endpoints of $e$ in $H'$.
In particular, fix an arbitrary edge $e$ not in $H$. For each $i$  we know by definition that it contains a path $p_i$ such that
$$
    w_e\sum_{e'\in p_i} (1/w_{e'}) \leq 2\log n.
$$
As we discussed above $\sum_{e'\in p_u} (1/w_{e'})$ is equal to the resistance between the two endpoints of $e$ in $p$. This implies that the effective resistance of $e$ over $p_i$ satisfies
$$
    R_{e}[p_i] \leq 2\log n/w_e.
$$
Now we observe that by definition the paths $p_i$ connect in parallel the two endpoints of $e$. Let $P = \sum_{j=1}^t p_i$. By invoking equality \ref{eq:parallel} and combining with the last inequality we get that
\begin{eqnarray*}
     (R_e[P])^{-1} & = & \left(\sum_{i=1}^t (R_{e}[p_i])^{-1}\right) \\ & \geq &  t w_e/(2\log n). 
\end{eqnarray*}     
which implies
$$ R_e[P] \leq \log n /(t w_e).$$
Finally, we have $R_e[G] \leq R_e [P]$ by Rayleigh's monotonicity law, since $P$ is a subgraph of $G$.
\end{proof}

Let $B_e$ be the $n\times n$ Laplacian of the unweighted edge $e$ (which is zero everywhere except a 2x2 submatrix). Looking at the
effective resistance algebraically, it is well understood that:
$$
     B_e \preceq R_e[G] G.
$$

Then the above lemma implies the following.

\begin{corollary} \label{cor:bundle}
 Let $G$ be a graph and $H$ be a $t$-bundle spanner of $G$.  For every edge $e$ of
 $G$ which is not in $H$, we have
 $$
     w_e B_e  \preceq  \frac{\log n}{t} G.
 $$
 \end{corollary}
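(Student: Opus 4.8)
The plan is to combine Lemma~\ref{lem:bundle} with the algebraic characterization of effective resistance in the Loewner order stated just above the corollary, namely $B_e \preceq R_e[G]\, G$. Fix an arbitrary edge $e$ of $G$ that is not in $H$; this is exactly the hypothesis under which Lemma~\ref{lem:bundle} is available.

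First I would scale the relation $B_e \preceq R_e[G]\, G$ by the positive weight $w_e$. Since multiplying a Loewner inequality by a nonnegative scalar preserves it, this gives $w_e B_e \preceq w_e R_e[G]\, G$. Next I would invoke Lemma~\ref{lem:bundle}, which yields the scalar bound $w_e R_e[G] \leq \log n / t$ precisely because $e\notin H$ and $H$ is a $t$-bundle spanner of $G$. Because $L_G$ is positive semidefinite, the scalar inequality can be applied coordinatewise to the quadratic form: for every $x$ we have $w_e R_e[G]\,(x^T L_G x) \leq \tfrac{\log n}{t}\,(x^T L_G x)$, i.e. $w_e R_e[G]\, G \preceq \tfrac{\log n}{t}\, G$. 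Chaining the two Loewner inequalities gives $w_e B_e \preceq \tfrac{\log n}{t}\, G$, which is the claim.

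There is essentially no hard step here; the derivation is a two-line consequence of the lemma and the background fact. The only points that merit care are that both scaling arguments rely on positivity — of the edge weight $w_e$ and of the quadratic form $x^T L_G x \geq 0$ — and that Lemma~\ref{lem:bundle} is applicable only to edges outside the bundle, which matches the hypothesis exactly. If any ingredient deserves a brief justification it is the background inequality $B_e \preceq R_e[G]\, G$ itself, which follows from $R_e[G] = b_e^T L_G^{+} b_e$ for the signed incidence vector $b_e$ of $e$ together with the fact that $L_G L_G^{+} L_G = L_G$ on the range of $L_G$; but the excerpt already takes this as understood, so I would simply cite it.
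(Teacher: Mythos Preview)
Your argument is correct and matches the paper's intent exactly: the corollary is stated without a separate proof, simply as the combination of Lemma~\ref{lem:bundle} with the background inequality $B_e \preceq R_e[G]\,G$, which is precisely the chaining you carry out.
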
 

\section{Parallel Sparsification}

\subsection{Parallel $t$-bundle Spanner Construction}

A $t$-bundle spanner can be computed iteratively in the obvious way: in the $i$th iteration we compute a spanner
$H_i$ for $G-\sum_{j=1}^{i-1} {H_j}$. Edges in $\sum_{j=1}^{i-1} H_j$ can declare themselves out of the $i$th iteration in the parallel or distributed model. Thus extending the algorithms of Baswana and Sen is easy, and we get the following corollaries.

\begin{corollary}  \label{cor:parSpanner}
On input of a graph $G$, a $t$-bundle spanner for $G$ of expected size $O(t n \log n)$ can be constructed with $O(t m\log n)$ work in $\tilde{O}(t \log n)$ time, with high probability. The algorithm runs in the CRCW PRAM model.
\end{corollary}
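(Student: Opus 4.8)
The plan is to build the $t$-bundle spanner by $t$ sequential invocations of the parallel Baswana--Sen routine of Theorem~\ref{th:parSpanner}. Set $G_0 = G$, and for $i = 1,\dots,t$ run the algorithm of Theorem~\ref{th:parSpanner} on $G_{i-1}$ to obtain a spanner $H_i$ of $G_{i-1}$, then put $G_i = G_{i-1} - H_i$. By the definition of a $t$-bundle spanner, $H = \sum_{i=1}^t H_i$ is exactly what we want, since $H_i$ is a spanner of $G - \sum_{j<i} H_j$ by construction. In the CRCW PRAM model the only extra work relative to a single spanner computation is maintaining the residual graph: each edge selected into some $H_j$ can mark itself as deleted in $O(1)$ depth using a concurrent write, and subsequent rounds simply ignore marked edges; this adds only a constant-factor overhead per round.

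For the size bound, observe that $G_{i-1}$ is a graph on at most $n$ vertices no matter what happened in rounds $1,\dots,i-1$, so Theorem~\ref{th:parSpanner} gives $E\big[\,|E(H_i)|\,\big|\,G_{i-1}\,\big] = O(n\log n)$ for every realization of $G_{i-1}$, hence $E\big[\,|E(H_i)|\,\big] = O(n\log n)$. Summing over $i$ by linearity of expectation yields $E\big[\,|E(H)|\,\big] = O(tn\log n)$. For the resource bounds, note that $|E(G_{i-1})| \le m$ always, so round $i$ costs $O(m\log n)$ work and $\tilde O(\log n)$ depth; since the rounds run one after another, work and depth simply add, giving $O(tm\log n)$ total work in $\tilde O(t\log n)$ time.

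The one point that needs a little care is the phrase ``with high probability'': Theorem~\ref{th:parSpanner} guarantees the work and time bounds for a single call only with probability $1 - n^{-c}$, so to have all $t$ calls meet their budgets simultaneously we take a union bound over the $t$ failure events. This is harmless because $t$ is polynomially bounded in $n$ (in our sparsification algorithm $t = O(\log^2 n\log^2\rho/\epsilon^2)$ with $\rho \le m/n \le n$), so the overall success probability is still $1 - n^{-c'}$ for a slightly smaller constant $c'$; alternatively, one can restart any round that overruns, incurring only an $O(1)$ factor in expectation. Apart from this bookkeeping and the trivial check that edge deletion does not asymptotically inflate the per-round cost, there is no real obstacle.
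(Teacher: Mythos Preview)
Your proposal is correct and follows exactly the same approach as the paper: iterate the Baswana--Sen routine $t$ times, letting edges already in $\sum_{j<i} H_j$ mark themselves out of round $i$, and sum the per-round bounds. The paper actually gives no formal proof beyond a one-sentence remark to this effect, so your additional care with linearity of expectation and the union bound over the $t$ high-probability events is more detail than the paper itself provides.
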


\begin{corollary} \label{cor:distrSpanner}
On input of a graph $G$, a $t$-bundle spanner for $G$ of expected size $O(t n\log n)$ can be constructed in the synchronous distributed model in $O(t \log^2 n)$ rounds and $O(t m\log n)$ communication complexity. Moreover, the length of each message communicated is $O(\log n)$.
\end{corollary}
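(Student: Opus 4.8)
The plan is to run the distributed spanner construction of Theorem~\ref{th:distrSpanner} for $t$ phases, extracting one component of the bundle per phase. Entering phase $i$, each vertex already knows which of its incident edges have been selected into one of the earlier components $H_1,\dots,H_{i-1}$ (it recorded a single bit per incident edge at the moment that edge was chosen), so the residual graph $G-\sum_{j<i}H_j$ is represented locally with no additional communication. Invoking Baswana--Sen on this residual graph returns a spanner $H_i$ of $G-\sum_{j<i}H_j$, which by definition is precisely the $i$th component; after $t$ phases $H=\sum_{i=1}^t H_i$ is a $t$-bundle spanner.

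The resource bounds then follow by summation over phases. Phase $i$ takes $O(\log^2 n)$ rounds and $O(m_i\log n)$ communication, where $m_i\le m$ is the number of edges in the residual graph entering that phase; over $t$ phases this is $O(t\log^2 n)$ rounds and $O(tm\log n)$ communication. Messages remain of length $O(\log n)$: each phase uses the same primitive, and the only bookkeeping added on top of it, namely marking an edge inactive once it joins a component, costs a single bit per incident edge and is absorbed into the existing $O(\log n)$ message budget. Finally, each $H_i$ has expected size $O(n\log n)$ by Theorem~\ref{th:distrSpanner}, so by linearity of expectation $H$ has expected size $O(tn\log n)$.

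I do not foresee a genuine obstacle here; the statement is essentially bookkeeping layered on Theorem~\ref{th:distrSpanner}. The one place to be mildly careful is that the residual graph in phase $i$ may be disconnected even when $G$ is connected, since earlier phases can strip away a vertex's last edge; this is harmless, because the Baswana--Sen guarantee is stated per edge and applies component-wise, and an isolated vertex simply contributes nothing to $H_i$. If one wanted the size bound to hold with high probability rather than only in expectation, a concentration argument would be required across the $t$ phases, which is why, consistently with Theorem~\ref{th:distrSpanner}, we claim only the bound on the expected size.
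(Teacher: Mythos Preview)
Your proposal is correct and matches the paper's own argument: the paper simply notes that a $t$-bundle spanner is computed iteratively, with edges already in $\sum_{j<i} H_j$ ``declaring themselves out'' of iteration $i$, and then states the corollary as an immediate extension of Theorem~\ref{th:distrSpanner}. Your write-up is in fact more detailed than the paper's (which gives only one sentence of justification), including the observations about local bookkeeping, disconnectedness of the residual graph, and linearity of expectation, but the underlying approach is identical.
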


\subsection{Sampling for Parallel Sparsification}

We will sparsify graphs using sampling. The Spielman-Srivastava scheme fixes the number of samples and for each sample one edge is selected according to a fixed probability distribution and gets added to the sparsifier~\cite{SpielmanS08}. In Algorithm\ref{alg:parallelsample} we use a slightly different sampling scheme, sampling each edge independently with a fixed probability.

\begin{algo} [h]
\underline{Input:} Graph $G$, parameter $\epsilon$

\underline{Output:} Graph $\tilde{G}$
\vspace{0.2cm}

\begin{algorithmic}[1]
\STATE{Compute a $(24 \log^2 n / \epsilon^2)$-bundle spanner $H$ for $G$}
\STATE{Let $\tilde{G}:=H$}
\STATE{For each edge $e \not \in H$ with probability $1/4$ add $e$ to $\tilde{G}$ with weight $4w_e$}
\STATE{Return $\tilde{G}$}
\end{algorithmic}
\caption{\textsc{ParallelSample}}\label{alg:parallelsample}
\end{algo}

We will need  a Theorem due to Tropp \cite{tropp2012user}, and more specifically its
following variant \cite{Harvey12}.

\begin{theorem} \label{th:tropp}
Let $Y_1,\ldots,Y_k$ be independent positive semi-definite matrices of size $n\times n$.
Let $Y = \sum_i Y_i$. Let $Z=E[Y]$. Suppose $Y_i\preceq R Z$. Then for all $\epsilon \in [0,1]$
\begin{eqnarray*}
Pr\left[\sum_i Y_i \preceq (1-\epsilon) Z\right] & \leq  & n \cdot exp(-\epsilon^2 /2 R)  \\
Pr\left[\sum_i Y_i \succeq (1+\epsilon) Z\right] & \leq & n \cdot exp(-\epsilon^2 /3 R).
\end{eqnarray*}
\end{theorem}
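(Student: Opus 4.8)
The statement is the matrix Chernoff bound of Tropp \cite{tropp2012user} (see also \cite{Harvey12}), rewritten in a coordinate system adapted to $Z$; the plan is to make that reduction explicit and then invoke the scalar estimates that turn the classical Chernoff exponent into the clean $\epsilon^2/2$ and $\epsilon^2/3$ forms.

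First I would whiten by $Z$. Since every $Y_i\succeq 0$ and $Y_i\preceq RZ$, both $\sum_iY_i$ and $Z$ are supported on $\mathrm{range}(Z)$, so we may restrict attention to that subspace and treat $Z$ as invertible of some dimension $n'\le n$. Put $X_i=Z^{-1/2}Y_iZ^{-1/2}$ and $W=\sum_iX_i$. Conjugation by $Z^{-1/2}$ preserves the semidefinite order, so $X_i\succeq 0$ and $X_i\preceq R\,I$, while $\sum_iE[X_i]=Z^{-1/2}\big(\sum_iE[Y_i]\big)Z^{-1/2}=I$. The same conjugation shows that $\sum_iY_i\preceq(1-\epsilon)Z$ is equivalent to $W\preceq(1-\epsilon)I$, which is contained in the event $\{\lambda_{\min}(W)\le 1-\epsilon\}$, and that $\sum_iY_i\succeq(1+\epsilon)Z$ is equivalent to $W\succeq(1+\epsilon)I$, which is contained in $\{\lambda_{\max}(W)\ge 1+\epsilon\}$. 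So it suffices to prove the two spectral bounds $\Pr[\lambda_{\min}(W)\le 1-\epsilon]\le n'\,e^{-\epsilon^2/(2R)}$ and $\Pr[\lambda_{\max}(W)\ge 1+\epsilon]\le n'\,e^{-\epsilon^2/(3R)}$ for a sum of independent PSD matrices with $X_i\preceq R\,I$ and mean $I$, since $n'\le n$.

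For the upper bound I would run the exponential method. For $\theta>0$, Markov's inequality and $e^{\theta\lambda_{\max}(W)}\le\operatorname{tr}\exp(\theta W)$ give $\Pr[\lambda_{\max}(W)\ge 1+\epsilon]\le e^{-\theta(1+\epsilon)}E[\operatorname{tr}\exp(\theta W)]$. The key step — the one place beyond elementary matrix manipulation — is Lieb's concavity theorem, which yields the subadditivity $E[\operatorname{tr}\exp(\theta W)]\le\operatorname{tr}\exp\big(\sum_i\log E[e^{\theta X_i}]\big)$. Using convexity of $x\mapsto e^{\theta x}$ on $[0,R]$ to get $e^{\theta X_i}\preceq I+\tfrac{e^{\theta R}-1}{R}X_i$ pointwise, then taking expectations and applying operator monotonicity of $\log$ together with $\log(I+A)\preceq A$, gives $\sum_i\log E[e^{\theta X_i}]\preceq\tfrac{e^{\theta R}-1}{R}I$, hence $E[\operatorname{tr}\exp(\theta W)]\le n'\exp\!\big(\tfrac{e^{\theta R}-1}{R}\big)$. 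Choosing $\theta$ so that $e^{\theta R}=1+\epsilon$ yields the classical bound $n'\big(e^{\epsilon}/(1+\epsilon)^{1+\epsilon}\big)^{1/R}$, and the calculus inequality $e^{\epsilon}/(1+\epsilon)^{1+\epsilon}\le e^{-\epsilon^2/3}$ for $\epsilon\in[0,1]$ finishes it. The lower bound is symmetric: replace $e^{\theta x}$ by $e^{-\theta x}$, use $e^{-\theta X_i}\preceq I-\tfrac{1-e^{-\theta R}}{R}X_i$, optimize with $e^{-\theta R}=1-\epsilon$ to obtain $n'\big(e^{-\epsilon}/(1-\epsilon)^{1-\epsilon}\big)^{1/R}$, and bound $e^{-\epsilon}/(1-\epsilon)^{1-\epsilon}\le e^{-\epsilon^2/2}$.

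I expect the real obstacle to be precisely the trace-exponential subadditivity: Golden--Thompson does not telescope past two matrices, so one must route through Lieb's theorem (or through the Ahlswede--Winter moment argument, which is lossier and would weaken the constants). Since this is exactly the content packaged in \cite{tropp2012user,Harvey12}, the honest plan is to cite that theorem and supply only the whitening reduction and the two one-variable estimates; reproving Lieb's inequality from scratch is the actual work and is more than this paper needs.
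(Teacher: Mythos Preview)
Your proposal is correct, and in fact goes well beyond what the paper does: the paper gives no proof at all for this theorem, stating it purely as a citation to Tropp \cite{tropp2012user} and Harvey's variant \cite{Harvey12}. Your closing remark that ``the honest plan is to cite that theorem'' is exactly what the paper does; the whitening reduction and the Lieb-based trace-exponential argument you sketch are accurate but are not supplied in the paper itself.
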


We have the following Theorem.

\begin{theorem} \label{th:parallelsample}
The output $\tilde{G}$ of algorithm \textsc{ParallelSample} on input $G$ and $\epsilon$ satisfies
with probability $1-1/n^2$ the following:
\begin{itemize}
\item [(a)] $ (1-\epsilon) G \preceq \tilde{G} \preceq (1+\epsilon )G.$
\item [(b)] The expected number of edges in $\tilde{G}$ is at most $$O(n\log^3 n/\epsilon^2+m/2).$$
\end{itemize}
\textsc{ParallelSample} can be implemented in the CWCR PRAM model to use $O(m\log^3 n /\epsilon^2)$ work in $\tilde{O}(\log^3 n/\epsilon^2)$ time. In the synchronous distributed model, \textsc{ParallelSample} can be implemented to run in $O(\log^4 n/\epsilon^2)$ rounds, with $O(m\log^3 n/\epsilon^2)$ communication complexity, using messages of size $O(\log n)$.
\end{theorem}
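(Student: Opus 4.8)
The plan is to derive part~(a) from the matrix concentration bound of Theorem~\ref{th:tropp}, part~(b) from linearity of expectation together with Corollary~\ref{cor:parSpanner}, and the implementation claims by instantiating Corollaries~\ref{cor:parSpanner} and~\ref{cor:distrSpanner} with $t=24\log^2 n/\epsilon^2$. For part~(a), first write $\tilde G = H + \sum_{e\notin H} X_e$, where the $X_e$ are independent and $X_e$ equals the single edge $e$ with weight $4w_e$ with probability $1/4$ and is the empty graph otherwise. Since $E[X_e]=w_eB_e$ and $G=H+\sum_{e\notin H}w_eB_e$, we have $E[L_{\tilde G}]=L_G$. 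After the standard change of variables that whitens by $L_G$ (writing $L_G^{+/2}$ for the square root of the pseudoinverse and $\Pi=L_G^{+/2}L_GL_G^{+/2}$ for the projection onto the range of $L_G$), the claim $(1-\epsilon)G\preceq\tilde G\preceq(1+\epsilon)G$ becomes $(1-\epsilon)\Pi\preceq L_G^{+/2}L_{\tilde G}L_G^{+/2}\preceq(1+\epsilon)\Pi$. The quantitative input is Corollary~\ref{cor:bundle}: because $H$ is a $t$-bundle spanner with $t=24\log^2 n/\epsilon^2$ and $e\notin H$, we have $w_eB_e\preceq\frac{\log n}{t}L_G$, so the whitened contribution of a sampled edge satisfies $L_G^{+/2}L_{X_e}L_G^{+/2}\preceq\frac{4\log n}{t}\Pi=R\,\Pi$ with $R=\epsilon^2/(6\log n)$.

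The one real obstacle is the deterministic bundle $H$: in whitened coordinates its contribution is $\Pi_H:=L_G^{+/2}L_HL_G^{+/2}$, whose operator norm can be as large as $1$ (for instance on the direction of any bridge of $G$, which every spanner contains), so it cannot be passed to Theorem~\ref{th:tropp} as one summand bounded by the small $R$ above, and $E$ of the purely random part, $\Pi-\Pi_H$, need not be well conditioned. We remove this obstacle by an analytic device that changes no Laplacian: conceptually replace each edge $f$ of $H$ by $k:=\lceil 6\log n/\epsilon^2\rceil$ parallel copies of weight $w_f/k$. This leaves $L_H$, and hence $L_{\tilde G}$ and $L_G$, unchanged, but now each deterministic summand $\frac1kL_G^{+/2}(w_fB_f)L_G^{+/2}$ has norm $\le 1/k\le R$, i.e. is $\preceq R\,\Pi$ (using $w_fR_f[G]\le 1$). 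Thus $L_G^{+/2}L_{\tilde G}L_G^{+/2}$ is exhibited as a sum $Y=\sum_iY_i$ of independent positive semidefinite matrices with $Y_i\preceq R\,\Pi$ for every $i$ and $Z:=E[Y]=\Pi$ exactly. Theorem~\ref{th:tropp} then bounds the probability that $Y$ violates $(1-\epsilon)\Pi\preceq Y\preceq(1+\epsilon)\Pi$ by $n\exp(-\epsilon^2/(2R))+n\exp(-\epsilon^2/(3R))$; substituting $R=\epsilon^2/(6\log n)$ makes this $n^{-\Omega(1)}$, and the constant $24$ in the bundle size is chosen precisely to bring it down to $1/n^2$, which gives part~(a). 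The subdivision is used only in this analysis; the algorithm keeps $H$ intact.

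For part~(b), Corollary~\ref{cor:parSpanner} gives $E[|H|]=O(tn\log n)=O(n\log^3 n/\epsilon^2)$, and Step~3 keeps each of the at most $m$ edges outside $H$ independently with probability $1/4$, contributing $\le m/4\le m/2$ more in expectation; linearity of expectation over both the randomness of $H$ and of the sampling yields the bound. For the implementation claims, Step~1 dominates: instantiating Corollary~\ref{cor:parSpanner} with $t=24\log^2 n/\epsilon^2$ gives $O(m\log^3 n/\epsilon^2)$ work in $\tilde O(\log^3 n/\epsilon^2)$ time in the CRCW PRAM model, and Step~3 is embarrassingly parallel (one independent coin per edge), adding only $O(m)$ work and $O(\log n)$ time for generating randomness and compacting the output. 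In the synchronous distributed model, Corollary~\ref{cor:distrSpanner} with the same $t$ gives $O(\log^4 n/\epsilon^2)$ rounds and $O(m\log^3 n/\epsilon^2)$ communication with $O(\log n)$-bit messages, and Step~3 adds $O(1)$ rounds and $O(m)$ communication since the two endpoints of an edge need only agree on a single coin flip. The main difficulty, apart from carrying constants accurately enough to reach the stated $1/n^2$, is the one isolated in the second paragraph: making the deterministic bundle compatible with the multiplicative concentration bound.
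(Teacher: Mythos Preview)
Your proposal is correct and takes essentially the same approach as the paper: both split the deterministic bundle $H$ into $\Theta(\log n/\epsilon^2)$ small pieces so that every summand (random or deterministic) is $\preceq R\,G$ with $R=\epsilon^2/(6\log n)$, then invoke Theorem~\ref{th:tropp}, and read the complexity claims off Corollaries~\ref{cor:parSpanner} and~\ref{cor:distrSpanner}. The only cosmetic difference is that the paper scales $H$ as a whole (taking $H_i=\lfloor\epsilon^2/(6\log n)\rfloor\, H$ and using $H\preceq G$), whereas you split each edge $f\in H$ into parallel copies and bound each via $w_fR_f[G]\le 1$.
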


\begin{proof}
The work, parallel time, and communication complexity guarantees for \textsc{ParallelSample} follow directly from the Corollaries~\ref{cor:parSpanner} and \ref{cor:distrSpanner}, by letting $t=O(\log^2 n/\epsilon^2)$.

Now let $B_e$ be the $n\times n$ Laplacian of the unweighted edge $e$. For each edge $e\not \in H$ we let $Y_e$ be the random variable defined as follows:
\begin{eqnarray*}
Y_e  & =  & 0,  \qquad \qquad \textnormal{with probability $3/4$}, \\
     &  = &  4 w_e B_e  \qquad ~\textnormal{with probability $1/4.$}
\end{eqnarray*}
Also we let $$H_i = \lfloor \epsilon^2 /(6\log n) \rfloor H,$$ for $i=1,\ldots, (\lfloor \epsilon^2 /(6\log n) \rfloor)^{-1}$.
We apply Theorem \ref{th:tropp} to the random matrix that is formed by summing the $H_i$'s and the $Y_e$'s. For the output of the algorithm, we clearly have
$$
    \tilde{G} = \sum_{e\not \in H} Y_e + \sum_{i} H_i = \sum_{e\not \in H} Y_e+ H.
$$
We also have that $E[\tilde{G}] = G$. Using $H \preceq G$, for each $i$ we have
$$H_i = \lfloor \epsilon^2 /(6\log n) \rfloor H \preceq  \epsilon^2 /(6\log n)  G.$$
In addition for each $e\not \in H$, we have
$$
    Y_i \preceq 4w_e B_e \preceq \epsilon^2 /(6\log n) G.
$$
The last inequality follows by setting $t = 24\log^2 n/\epsilon^2$ in Corollary \ref{cor:bundle}.
Thus the condition of Theorem \ref{th:tropp} is satisfied for $R = \epsilon^2 /(6\log n)$, which substituted in the bounds of the Theorem proves that (a) holds with probability at least $1-1/2n^2$.
For (b), observe that the expected number of edges in $H$ is $O(n\log^3 n/\epsilon^2)$ as stated in Corollaries~\ref{cor:parSpanner} and \ref{cor:distrSpanner}. The expected numbers of edges outside $H$ is $m/4$ and a simple application of Chernoff's inequality implies that the number is at most $m/2$ with probability at least $1-1/2n^2$. Hence a union bound gives that both (a) and (b) hold with probability at least $1-1/n^2$.
\end{proof}

\subsection{The Algorithm}

The main sparsification routine is presented in Algorithm~\ref{alg:parallelsparsify}.

\begin{algo} [h]

\underline{Input:} Graph $G$, parameters $\epsilon,\rho$ \\
\underline{Output:} Graph $\tilde{G}$
\vspace{0.01cm}

\begin{algorithmic}[1]
\STATE{Set $G_0: = G$}
\STATE{For $i=1:\lceil \log \rho \rceil$}
\qquad \STATE{Set $G_i: = \textsc{ParallelSparsify}(G_{i-1},\epsilon/\lceil \log \rho \rceil)$}
\STATE{Return $G_{\lceil \log \rho \rceil }$}
\end{algorithmic}
\caption{\textsc{ParallelSparsify}}\label{alg:parallelsparsify}
\end{algo}

We prove the following Theorem.
\begin{theorem} \label{th:parallelsparse}
The output $\tilde{G}$ of algorithm \textsc{ParallelSparsify} on input $G$ and $\epsilon,\rho$ satisfies
$$
    (1-\epsilon) G \preceq \tilde{G} \preceq (1+\epsilon) G
$$
with high probability. The expected number edges in $G$ is at most
$$
   O(n \log^3 n \log^3 \rho/\epsilon^2 + m/\rho).
$$
The algorithm does $O(m\log^2 n \log^3 \rho /\epsilon^2)$ work and runs in $O(\log^3 n \log^3 \rho /\epsilon^2)$ time in the CRCW mode. In the synchronous distributed model, it can be implemented to run in $O(\log^4 n \log^3 \rho/\epsilon^2)$ rounds with $O(m \log^3 n \log^3 \rho/\epsilon^2)$ communication complexity, using messages of size $O(\log n)$.
\end{theorem}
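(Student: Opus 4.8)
The plan is to treat \textsc{ParallelSparsify} as $k:=\lceil\log\rho\rceil$ sequential invocations of the single-round sparsifier analysed in Theorem~\ref{th:parallelsample}, each run with error parameter $\epsilon':=\epsilon/k$, and to obtain the three claims by composing the per-round guarantees. For the spectral bound, first record that $\preceq$ is transitive and is preserved under multiplication by a positive scalar. By Theorem~\ref{th:parallelsample}(a), for each $i$ we have $(1-\epsilon')G_{i-1}\preceq G_i\preceq(1+\epsilon')G_{i-1}$ with probability $1-1/n^2$; a union bound over the $k$ rounds makes all of these hold simultaneously with high probability. On that event a straightforward induction — combine $G_i\preceq(1+\epsilon')G_{i-1}$ with the hypothesis $G_{i-1}\preceq(1+\epsilon')^{i-1}G$ using transitivity and positivity of the scalar, and symmetrically below — gives $(1-\epsilon')^kG\preceq\tilde G\preceq(1+\epsilon')^kG$. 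One must be slightly careful here: $(1+\epsilon/k)^k$ need not be $\le 1+\epsilon$, but it is always $\le e^{\epsilon}\le 1+2\epsilon$ for $\epsilon\in[0,1]$, while $(1-\epsilon/k)^k\ge 1-\epsilon$ by Bernoulli's inequality; so $\tilde G$ is a $(1\pm 2\epsilon)$-approximation of $G$, and rescaling $\epsilon$ by the constant $2$ (equivalently, running each round with parameter $\epsilon/(2k)$) yields the stated $(1\pm\epsilon)$ bound.

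For the expected size, let $m_i:=\mathbb{E}\,|E(G_i)|$ with $m_0=m$. Since $\epsilon'=\epsilon/k$ and $k=\Theta(\log\rho)$ we have $1/(\epsilon')^2=\Theta(\log^2\rho/\epsilon^2)$, so conditioning on $G_{i-1}$, applying Theorem~\ref{th:parallelsample}(b), and taking expectations gives a recursion of the form $m_i\le c\,n\log^3 n\log^2\rho/\epsilon^2+m_{i-1}/2$. Unrolling this geometric recursion leaves $m_k\le 2c\,n\log^3 n\log^2\rho/\epsilon^2+m/2^{k}$, and since $2^{k}=2^{\lceil\log\rho\rceil}\ge\rho$ the trailing term is at most $m/\rho$, which is within the claimed $O(n\log^3 n\log^3\rho/\epsilon^2+m/\rho)$.

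For the resource bounds, note that round $i$ needs the output of round $i-1$, so the $k$ rounds run sequentially and I would simply sum the per-round costs of Theorem~\ref{th:parallelsample}. Each round keeps only a (possibly reweighted) subset of the current edge set, so $E(G_i)\subseteq E(G_{i-1})\subseteq E$ deterministically and every round operates on at most $m$ edges; substituting $\epsilon\mapsto\epsilon'$, using $1/(\epsilon')^2=\Theta(\log^2\rho/\epsilon^2)$, and multiplying the per-round work, parallel time, round count and communication by $k=\Theta(\log\rho)$ yields the stated bounds, with message size unchanged at $O(\log n)$. The only genuinely non-routine point is the spectral composition — making the multiplicative errors telescope and checking that $(1\pm\epsilon/k)^k$ stays inside $[1-\epsilon,1+O(\epsilon)]$ — together with the caveat that the size recursion must be read in expectation (condition on $G_{i-1}$, apply Theorem~\ref{th:parallelsample}(b), then average).
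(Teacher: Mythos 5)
Your proposal is correct and follows essentially the same route as the paper: run \textsc{ParallelSample} for $\lceil\log\rho\rceil$ rounds with per-round parameter $\epsilon/\lceil\log\rho\rceil$, compose the spectral guarantees by induction with a union bound over rounds, unroll the geometric recursion for the expected edge count, and add up the per-round resource costs from Theorem~\ref{th:parallelsample}. Two small points of comparison. First, on the accounting of work and communication: the paper bounds the total by observing that the graph sizes decrease geometrically, so the first iteration dominates; you instead bound every round by $m$ edges and multiply by $\Theta(\log\rho)$ rounds, which is valid but pays an extra $\log\rho$ factor relative to the paper's argument (for the parallel time and distributed round count, where no geometric saving is available, multiplying by the number of rounds is exactly what the paper does too). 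Be aware, though, that under either accounting, plugging in the per-round work $O(m\log^3 n/(\epsilon')^2)$ of Theorem~\ref{th:parallelsample} yields $O(m\log^3 n\log^2\rho/\epsilon^2)$ (paper's way) or $O(m\log^3 n\log^3\rho/\epsilon^2)$ (your way), neither of which literally matches the stated $O(m\log^2 n\log^3\rho/\epsilon^2)$; so your claim that the summation ``yields the stated bounds'' is off by a $\log n$ versus $\log\rho$ factor, a discrepancy that is present in the paper's own statement rather than a defect of your argument. Second, you are more careful than the paper on two details it glosses over: that $(1+\epsilon/k)^k$ can exceed $1+\epsilon$ and hence a constant rescaling of $\epsilon$ (or a remark that constants are absorbed in the $O(\cdot)$) is needed, and that the size recursion must be read in expectation by conditioning on $G_{i-1}$ before applying Theorem~\ref{th:parallelsample}(b). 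Both refinements are welcome and do not change the substance of the proof.
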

\begin{proof}
We can show using induction and Theorem \ref{th:parallelsample}
that graph $G_t$ satisfies
$$
   (1-\epsilon/\log \rho)^t G \preceq  G_t \preceq (1+\epsilon/\log \rho)^t G.
$$
with probability $(1-1/n^2)^t$ and the expected number of edges in it
is at most $$O(n t\log^3 n \log^2 \rho /\epsilon^2 + m/2^t).$$
Since $t\leq \lceil \log \rho \rceil $, we get the desired spectral inequality.
The parallel and distributed implementations are straightforward. The
total work (and communication complexity) is dominated by the work performed in the first iteration,
since the size of the graphs decrease geometrically. The claims
on the parallel and distributed implementations then follow from Theorem~\ref{th:parallelsample}.
\end{proof}

\section{Improved parallel SDD solver}

\textbf{The Peng-Spielman parallel framework.} Peng and Spielman \cite{PengSpielman13} gave the first solver for symmetric diagonally dominant (SDD) linear system that does near-linear work in polylogarithmic time. We shortly review the basic ideas behind their solver in order to highlight how our sparsification routine can be plugged into it, thus deriving  work and time guarantees for a more efficient solver.

Let $D$ be a diagonal matrix and $A$ be the adjacency matrix of a graph with positive weights. The main idea in~\cite{PengSpielman13} is a reduction of the input SDD linear system with matrix $M_1 = D-A$, to a linear system with matrix $\tilde{M_1} = D-AD^{-1}A$ which is also shown to be SDD. Matrix $\tilde{M}_1$ is actually never formed explicitly because it can be too dense, as all vertices that are within a distance of 2 in graph $A$ form now a clique in graph $AD^{-1}A$. The \textbf{first} step to remedying this problem is replacing $\tilde{M_1}$ with a $(1+\epsilon/2)$-approximation ${\hat{M_1}}$ that has $O(n+ m\log n/\epsilon^2)$ edges and doesn't contain these cliques, but replaces them with sparse graphs. As shown in Corollary 6.4 of \cite{PengSpielman13} this can be done in in $O(\log n)$ time and $O(n+m\log^2 n/\epsilon^2)$ work. The \textbf{second} step is further sparsifying $\tilde{M_1}$ down to $O(n\log^c n/\epsilon^2)$ non-zeros (for some fairly large constant $c$), using the parallelized Spielman-Teng sparsification algorithm. This step forms a matrix $M_2$ which is a $(1+\epsilon)$-approximation of $\tilde{M_1}$, and also an SDD matrix which is of the form $D'-A'$.

This construction is repeated recursively, producing an `approximate inverse chain' for $M_1$: $$\{M_1,M_2,\ldots,M_d\}.$$ The depth $d$ of the chain needs to be $O(\log \kappa)$ where $\kappa$ is the condition number of $M_1$, i.e. the ratio of its largest to its smallest non-zero eigenvalue. This is because for $d=O(\kappa)$ the condition number of $M_d$ is very close to $1$, i.e. $M_d$ is essentially the identity matrix, and no further reductions are required.  The $(1+\epsilon)$ approximations incurred by the construction of $M_{i+1}$ from $M_i$ compound in a multiplicative fashion. So, in order to keep the total approximation bounded we need to pick, $\epsilon = 1/ O(\log \kappa)$.

As shown in Theorem 4.5 of \cite{PengSpielman13} an approximate inverse chain can be used to produce an approximate solution for the system in $O(d\log n)$ depth and total work proportional to the total number of non-zero entries in the matrices that constitute the chain.

\textbf{The solver.} We now outline the construction of a parallel SDD solver that uses our improved parallel sparsification algorithm. We can think of all matrices in the approximate inverse chain as Laplacians, and we will refer to them as graphs. For simplicity, we will use $\tilde{O}$ to suppress polylogarithmic factors in $n$ and~$\kappa$. Also, we note that the spectral approximation bounds hold with high probability, and the claims on the number of edges of the sparsifiers hold in expectation; we won't further discuss randomization for the sake of brevity.

Recall that in the construction of the approximate inverse chain, one has to set $\epsilon = 1/O(\log \kappa)$. Given that, observe also that the `threshold of applicability' of Theorem \ref{th:parallelsparse} is when the graph $M_i$ has more than $\tilde{O}(n\log^3 n \log^2 \kappa)$ edges, whenever the sparsification factor $j$ is of polylogarithmic size. Let us denote by $m'$ this threshold.  Whenever sparsification of $\tilde{M}_i$ is not possible, we simply let $M_{i+1}= \tilde{M}_i$, as implicitly done in \cite{PengSpielman13}.

When constructing $M_{i+1}$ from $M_i$, the number of edges goes up by a factor of $O(\log n \log^2 \kappa)$, in the first step that constructs $\tilde{M}_i$. In order to keep the total size of the inverse approximate chain and thus the work of the solver bounded, we only need to bring the graph back to its original size, if it exceeds $m'$. Besides its stronger guarantees, a relative advantage of our routine is that we can use it to sparsify the input graph by any factor~$\rho$, rather than aim for a very sparse graph as Peng and Spielman \cite{PengSpielman13} propose. So, using Theorem~\ref{th:parallelsparse} the graph can be sparsified down to ${O}(m' +m)$ edges, by setting $\rho=O(\log n \log^2 \kappa)$. The total work is $\tilde{O}((m'+m) \log^2 n \log^2 \kappa)$. Hence the total size of the approximate inverse chain is $\tilde{O}((m'+m)  \log \kappa )$, and the total work required for its construction is $\tilde{O}((m'+m) \log^2 n \log^3 \kappa)$.

We can improve the dependence on $m$ by constructing the chain not for the input matrix $M$, but for a $2$-approximation $M'$ of it, which has $\tilde{O}(n\log^3 n + m/\log^2 n \log^3 \kappa)$ edges. This can be constructed by invoking Theorem \ref{th:parallelsparse}, with $\epsilon = 1/2$ and $\rho = O(\log^2 n \log^3 \kappa)$. The total work for this step is $\tilde{O}(m \log^2 n)$. It is well understood that this approximate chain for $M'$ can be used as a preconditioner for $M$ (in the same way its own chain would be used) incurring only a constant factor in the work and time guarantees.

Combining the above with Theorem 4.5 of  \cite{PengSpielman13}, we get the following Theorem.

\begin{theorem}
 On input of a linear system $Mx = b$, where $M$ is an SDD matrix of dimension $n$ with $m$ non-zeros, a vector $x'$ that satisfies $||b - M^+ x||_M < \epsilon$ can be constructed  with probability at least $1/2$ in polylogarithmic time and $\tilde{O}(m\log^2 n + m' \log^5n \log^5 \kappa)$ work.
\end{theorem}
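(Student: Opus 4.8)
The plan is to plug the parallel sparsification routine of Theorem~\ref{th:parallelsparse} into the Peng--Spielman approximate-inverse-chain framework and then invoke Theorem~4.5 of~\cite{PengSpielman13} for the actual solve. First I would fix the per-level accuracy $\epsilon = 1/O(\log\kappa)$: this is exactly what the framework needs so that the $O(\log\kappa)$ compounded $(1+\epsilon)$-approximations along the chain multiply out to a constant overall distortion.

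The next step is a preprocessing reduction. Rather than build the chain for $M$ directly, I would apply Theorem~\ref{th:parallelsparse} once with $\epsilon=1/2$ and sparsification factor $\rho = O(\log^2 n\log^3\kappa)$ to obtain a $2$-approximation $M'$ of $M$ with $\tilde O(n\log^3 n + m/(\log^2 n\log^3\kappa))$ edges, at a cost of $\tilde O(m\log^2 n)$ work in polylogarithmic time; standard facts about preconditioning let the chain built for $M'$ be used to solve systems in $M$ with only a constant-factor overhead in work and time.

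Then I would construct the chain $\{M_1=M', M_2,\ldots,M_d\}$ with $d=O(\log\kappa)$ levels, each level proceeding in the two substeps of~\cite{PengSpielman13}: (i) form a sparse $(1+\epsilon/2)$-approximation $\hat M_i$ of the implicitly defined, possibly dense matrix $\tilde M_i = D_i - A_i D_i^{-1} A_i$, which by Corollary~6.4 of~\cite{PengSpielman13} increases the edge count by a factor $O(\log n\log^2\kappa)$ in $O(\log n)$ time; and (ii) re-sparsify $\hat M_i$ back down via Theorem~\ref{th:parallelsparse} with $\rho = O(\log n\log^2\kappa)$, chosen precisely to cancel the blow-up of substep (i), producing $M_{i+1}$; whenever the current graph already has fewer than the threshold $m' = \tilde O(n\log^3 n\log^2\kappa)$ edges I would simply pass it through unsparsified, as is implicit in~\cite{PengSpielman13}. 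The point of this bookkeeping is that every $M_i$ stays at size $O(m' + (\text{edges of }M'))$, so the whole chain has total size $\tilde O((m' + m/(\log^2 n\log^3\kappa))\log\kappa)$; since the work of Theorem~\ref{th:parallelsparse} at a given level is a $\tilde O(\log^2 n\log^3\kappa)$ factor over that level's size, the total chain-construction work telescopes (sizes decrease geometrically within a level and levels are balanced) to $\tilde O(m\log^2 n + m'\log^2 n\log^3\kappa) = \tilde O(m\log^2 n + n\log^5 n\log^5\kappa)$.

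Finally I would feed the chain into Theorem~4.5 of~\cite{PengSpielman13}, which turns it into a solver of depth $O(d\log n) = \mathrm{poly}(\log n,\log\kappa)$ and work proportional to the total chain size (up to the $O(\log(1/\epsilon))$ refinement iterations needed to drive $\|b-M^+x\|_M$ below the target $\epsilon$), yielding the stated bound. The probability of success is handled by a union bound over the $\mathrm{poly}(\log n, \log\kappa)$ many randomized calls to Theorem~\ref{th:parallelsparse}, each succeeding with probability $1-1/n^2$, which leaves probability at least $1/2$. The step I expect to require the most care is not any single estimate but the global accounting: simultaneously verifying that the per-level accuracy stays at $1/O(\log\kappa)$ so the end-to-end chain approximation is controlled, that the $\rho$ in substep (ii) is tuned to exactly offset the densification of substep (i) so the chain does not grow across levels, and that the edge-count bounds of Theorem~\ref{th:parallelsparse} (which hold in expectation) combine with its high-probability spectral guarantees into a single clean high-probability statement.
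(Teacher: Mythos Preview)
Your proposal is correct and follows essentially the same route as the paper: preprocess $M$ into a $2$-approximation $M'$ via Theorem~\ref{th:parallelsparse} with $\epsilon=1/2$ and $\rho=O(\log^2 n\log^3\kappa)$, build the Peng--Spielman approximate inverse chain for $M'$ with per-level accuracy $1/O(\log\kappa)$ using Theorem~\ref{th:parallelsparse} (with $\rho=O(\log n\log^2\kappa)$) to undo the densification of each Corollary~6.4 step, pass through unsparsified below the threshold $m'$, and finish with Theorem~4.5 of~\cite{PengSpielman13}. The only cosmetic difference is ordering: the paper first analyzes the chain for a generic input and only afterwards introduces the preprocessing step to improve the $m$-dependence, whereas you front-load it.
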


\section{Concluding Remarks}

{\bf Remark 1.} Multigrid algorithms
provably do linear work in
logarithmic time, for certain very special classes of SDD
systems that arise from the discretization of partial
differential equations \cite{Bramble93}.
The algebra underlying multigrid
is quite different than that used by Peng and Spielman;
in contrast with their algorithm, the spectral
approximation does not accumulate multiplicatively in the multigrid `chain'.
This imposes a much less demanding constraint for the
approximation quality between two subsequent levels,
which can be constant, rather than $O(1/\log \kappa)$. Much
of the efficiency of these specialized multigrid algorithms
stems from this fact. It  remain open whether something similar is possible for
general SDD matrices, In particular, it is still open
whether there is an $O(n)$-work $O(\log n)$ time
algorithm for regular weighted two-dimensional grids that
are `affinity' graphs of images. Experimental evidence \cite{KrishnanFS13}
seems to suggest that the possibility cannot be dismissed.

{\bf Remark 2.} It can be shown that low-stretch trees
can replace spanners in our construction, reducing the
size of the sparsifiers by an $O(\log n)$ factor. 
 The potential advantage of
such an algorithm would be that it provides a sparsifier
which is expressed naturally as a sum of trees.

{\bf Remark 3.} While a significant improvement over
the solver presented in \cite{PengSpielman13}, the total work of our parallel algorithm
remains high (in terms of the logarithmic factors) especially
for sparse graphs. We conjecture that more improvements
are possible, and will probably have to use a different
algebraic framework (see Remark~1). Within the Peng and
Spielman framework, it seems plausible that improvements
can come from replacing the $t$-bundle
 by a sparser object; this presents
us an interesting problem. The number of logarithmic
factors can be probably somewhat decreased by reducing
the dimension $n$, potentially by using a two-level
`Steiner preconditioning' scheme \cite{KoutisMiller08}.

{\bf Remark 4.} We wish emphasize the simplicity and
implementability of our algorithm as a stand-alone sparsification routine,
relative to the other two known solve-free algorithms by
Spielman and Teng \cite{SpielmanTeng04} and Kapralov
and Panigrahi \cite{KapralovP2012}. Comparing
to the latter, our algorithm has also the
`right' dependency on $\epsilon$ ($1/\epsilon^2$ vs $1/\epsilon^4$)
and is flexible with the sparsification factor $\rho$.


\bibliographystyle{plain}

\end{document}